\theoremstyle{definition}
\newtheorem{theorem}{Theorem}
\newtheorem{example}{Example}
\newtheorem{construction}{Construction}
\newcommand{\diag}{{\rm diag}}
\begin{document}
\title{Erasure Correcting Codes by Using Shift Operation and Exclusive OR}
\author{\IEEEauthorblockN{Yuta Hanaki and Takayuki Nozaki}
\IEEEauthorblockA{
Dept.\ of Informatics, Yamaguchi University, JAPAN \\
Email: {\tt \{g022vb,tnozaki\}@yamaguchi-u.ac.jp}}
}

\maketitle
\begin{abstract}
  This paper proposes an erasure correcting code and its systematic form for the distributed storage system.
  The proposed codes are encoded by exclusive OR and bit-level shift operation.
  By the shift operation, the encoded packets are slightly longer than the source packets.
  This paper evaluates the extra length of encoded packets, called overhead, and shows that the proposed codes have smaller overheads than the zigzag decodable code, which is an existing code using exclusive OR and bit-level shift operation.
\end{abstract}

\IEEEpeerreviewmaketitle

\section{Introduction}
The distributed storage systems realize a reliable data storage system via multiple data storage devices.
In the distributed storage systems, each original data (or message) is split into several source packets.
Those source packets are encoded by an erasure correcting code and each encoded packet is stored in a data storage device.
Hence, even if several data storage devices are broken, one can recover the original data by using erasure decoding.
Nowadays, the distributed storage systems are used in cloud storage services \cite{zhang2010cloud}, e.g, Google file system and Dropbox \cite{drago2012inside}.

Each packet is composed of multiple bits.
We assume that an erasure correcting code generates $N$ encoded packets from $K$ source packets, where $N>K$.
An erasure correcting code satisfies {\it combination property} \cite{dai2017new} if the original message can be decoded from arbitrary $K$ source packets.
It is easily confirmed that the maximum distance separative (MDS) codes satisfy combination property.  

Read-Solomon (RS) code \cite{Reed1960polynomial} is an MDS code defined over a non-binary finite field.
Since the encoding and the decoding algorithms of RS codes are performed over the non-binary finite field, the computation complexity is high \cite{vingelmann2010multimedia} and the electric energy consumption is also high \cite{hayinga1999energy}.
Hence, distributed storage systems with RS codes are not suitable for the situation in which one needs high throughput or one needs to save energy (e.g, battery-powered devices).

An erasure correcting code is {\it sub-optimal} if the code satisfies combination property but the length of encoded packets are slightly longer than the source packets.
Zigzag decodable (ZD) code \cite{dai2017new,sung2013zigzag} is a sub-optimal code.
ZD codes are encoded by using exclusive OR (XOR) and bit-level shift operation and are efficiently decoded by zigzag decoding \cite{gollakota2008zigzag}.
Hence, it is known that ZD codes have lower encoding and decoding complexities than RS codes \cite{dai2017new}.

By the bit-level shift operation, the encoded packets are slightly longer than the source packets. 
We refer to the extra length of an encoded packet as {\it overhead} of an encoded packet.
A code with large overhead requires large storage size.
Hence we should construct a code with small overhead.

In this paper, we construct a sub-optimal code which has smaller overhead than the ZD codes.
We refer to the constructed code as the shift and XOR (SXOR) code.
The SXOR code can be also encoded by using XOR and bit-level shift.
In this paper, firstly, we consider a maximum a posteriori (MAP) decoding algorithm for the ZD code.
As a result, we see that the MAP decoding algorithm is efficiently realized by an infinite impulse response (IIR) filter.
Secondly, we construct a sub-optimal code encoded by shift and XOR with small overhead under MAP decoding.
Thirdly, we construct a systematic form of an SXOR code.
Finally, we evaluate the overhead of ZD codes and SXOR codes.
As a result, we show that SXOR codes have smaller overhead than ZD codes.

This paper is organized as follows:
Section \ref{sec:ZigZag} gives notations and definition of ZD codes.
In Section \ref{sec:decoding}, we consider the MAP decoding algorithm for the codes encoded by using XOR and bit-level shift.
Section \ref{sec:pro} proposes SXOR code and its systematic form.
Section \ref{sec:per} evaluates the overhead of ZD codes and SXOR codes.
Section \ref{sec:con} concludes the paper.

\section{ZD Codes and Zigzag Decoding \label{sec:ZigZag}}
This section explains the ZD code and the zigzag decoding algorithm with a toy example.
Moreover, we introduce a construction of a ZD code. 

\begin{example}\label{ex:1}
As a toy example, we consider a ZD code which generates four encoded packets from two source packets with length $4$.
The first (resp. second) encoded packet $\bm{c}_1$ (resp. $\bm{c}_2$) stores the first (resp. second) source packet $\bm{s}_1 = (s_{1,1},s_{1,2},s_{1,3},s_{1,4})$ (resp. $\bm{s}_2 = (s_{2,1},s_{2,2},s_{2,3},s_{2,4})$), i.e, $\bm{c}_1 = \bm{s}_1$ and $\bm{c}_2 = \bm{s}_2$.
The third encoded packet $\bm{c}_3=(c_{3,1},c_{3,2},c_{3,3},c_{3,4})$ is generated from the bit-wise XOR of two source packets $\bm{s}_1,\bm{s}_2$, i.e, $\bm{c}_3=\bm{s}_1+\bm{s}_2$.
The fourth encoded packet $\bm{c}_4=(c_{4,1},c_{4,2},c_{4,3},c_{4,4},c_{4,5})$ is generated from the bit-wise XOR of $\bm{s}_1$ and $\bm{s}_2$ with a right shift, i.e, $\bm{c}_4=(s_{1,1},s_{1,2}+s_{2,1},s_{1,3}+s_{2,2},s_{1,4}+s_{2,3},s_{2,4})$.
Note that the length of the fourth packet is $5$.

Now, consider the decoding from two encoded packets ${\bm c}_3, {\bm c}_4$.
Since the first bit of ${\bm c}_4$ stores the first bit of ${\bm s}_1$,
we have $s_{1,1} = c_{4,1}$.
By using this result, we can recover the first bit of ${\bm s}_2$ from the first bit of ${\bm c}_3$, i.e, $c_{3,1} = s_{2,1} + c_{4,1}$.
Similarly, the decoder recovers $s_{1,2}, s_{2,2}, s_{1,3}, \dots, s_{2,4}$ and the decoding is success.
Since the decoding process takes zigzag path in the encoded packets as in this example, this decoding is called {\it zigzag decoding} \cite{sung2013zigzag}.

\end{example}

We assume that a file is split into $K$ source packets $\bm{s}_1, \bm{s}_2, \cdots, \bm{s}_K$.
Each source packet is composed of $L$ bits.
The $j$-th source packet is denoted by
\begin{equation*}
  \bm{s}_j
   =
  (s_{j,1},s_{j,2},\cdots,s_{j,L}).
\end{equation*}
We introduce the polynomial representation for the source packets easily to describe the shift operation.
The polynomial representation of the source packet $\bm{s}_j$ is given by
\begin{align*}
  s_j(z)
   =
  {\textstyle \sum_{k=1}^{L}} s_{j,k}z^{k-1}.
\end{align*}
A ZD code generates the $N$ encoded packets by using shift operation and XOR of the $K$ source packets.
By the shift operation, the encoded packets are slightly longer than the source packets.
Assuming that the length of the encoded packets $\bm{c}_i$ is given by $L+\ell_i$, we denote
\begin{align*}
  \bm{c}_i
   =
  (c_{i,1},c_{i,2},\cdots,c_{i,L+\ell_i}).
\end{align*}
Here, $\ell_i$ is the number of extra bits generated by the shift operation and  called {\it overhead}.
Similarly, the polynomial representation of the encoded packet is given by
\begin{align}
  c_i(z) 
  = 
  {\textstyle \sum_{j=1}^{L+\ell_i}} c_{i,j}z^{j-1}.
  \label{eq:c}
\end{align}

Each encoded packet is generated as follows: 1) shifting source packets and 2) adding those packets. 
Note that $z^ts_i(z)$ denotes the right shifting of $s_i(z)$ with offset $t$.
Hence, in the ZD code, the $i$-th encoded packet is given as 
\begin{align}
  c_i(z)
  =
  {\textstyle \sum_{j=1}^{K}} a_{i,j}(z)s_j(z), 
  \label{eq:c_sum}
\end{align}
where $a_{i,j}(z)$ is a monomial of $z$, i.e, $a_{i,j}(z) \in \{0,1,z,z^2,\dots\}$.
We denote the degree of $a_{i,j}(z)$, by $\deg(a_{i,j}(z))$.
Then, we have $\ell_i = \max_{1 \leq j \leq K}\deg(a_{i,j}(z))$.
Denote the $K$ source packets and $N$ encoded packets, by
\begin{align*}
  \bm{c}(z)
  :=&
  (c_1(z),c_2(z),\ldots,c_N(z)),
 \\
  \bm{s}(z)
  :=&
  (s_1(z),s_2(z),\ldots,s_K(z)).
\end{align*}
We define the generator matrix by $\mathbf{A}(z) := (a_{i,j}(z))$.
Then, the ZD code is generated as
\begin{align*}
  \bm{c}(z)
   =
   \bm{s}(z)\mathbf{A}(z).
\end{align*}
To simplify the notation, we denote $\mathbf{A}(z)$ by $\mathbf{A}$.

\begin{table}[tb]
  \centering
  \caption{The number of source packets $K$ and maximum overhead $\ell_{max}$ \label{tab:ZD}}
  \begin{small}
  \begin{tabular}{|c|c|c|c|c|} \hline
    $K$ & $2$ & $3$ & $4$ & $5\leq K$ \\ \hline
    $\ell_{\max}$ & $1$ & $1$ & $3$ & $K(K-1)/2$ \\ \hline
  \end{tabular}
  \end{small}
\end{table}
The ``good'' ZD code can be decoded by zigzag decoding and has the small {\it maximum overhead} $\ell_{\max} := \mathrm{max}_{1 \leq i \leq N}\ell_i$ and {\it total overhead} $\ell_{\mathrm{sum}} := \sum_{i=1}^{N}\ell_i$. 
In \cite{ZigZag3}, \cite{ZigZag1}, the ZD codes with $2K=N$ are proposed.
The ZD codes with the smallest maximum overhead is given in \cite{ZigZag3}.
Table \ref{tab:ZD} shows those maximum overhead.
In \cite{ZigZag3}, for $K=2,3,4$, the generator matrixes via heuristic approach and for $K \geq 5$ the generator matrixes are constructed from Hankel matrixes.
For example, the generator matrix with $K=3$ in \cite{ZigZag3} is
\begin{align}
  \mathbf{A}
  =
  \begin{pmatrix}
    1 & 0 & 0 & 1 & z & z \\
    0 & 1 & 0 & z & 1 & z \\
    0 & 0 & 1 & z & z & 1 
  \end{pmatrix}. \label{eq:K=3}
\end{align} 
\section{MAP Decoding Algorithm for ZD Code} \label{sec:decoding}
In this section, we will show that the ZD codes are also efficiently decoded by MAP decoding algorithm.

Let $\mathbb{F}_2$ be the finite field of order $2$.
Let $\mathbb{F}_2[z]$ be the polynomial ring with the coefficient $\mathbb{F}_2$.
Moreover, we denote field of rational functions over $\mathbb{F}_2$ as $\mathbb{F}_2(z)$, i.e,
\begin{equation*}
 \mathbb{F}_2(z)
  :=
 \{ f(z)/g(z) \mid 
  f(z),g(z)\in\mathbb{F}_2[z], g(z) \not= 0 \}.
\end{equation*}
In the MAP decoding algorithm, the source packets are decoded from $K$ encoded packets.
We denote the $K$ encoded packets, by $c_{i_1},c_{i_2},\dots,c_{i_K}$.
Let $\mathcal{I}$ be the set of indexes of the encoded packets, i.e,  $\mathcal{I} := \{ i_1, i_2, \dots, i_K \}$. 
We denote the $K\times K$ submatrix of $\mathbf{A}$ obtained by choosing columns in the set $\mathcal{I}$, by $\mathbf{A}_{\mathcal{I}}$.
Notice that
\begin{align} 
  (s_{1},s_{2},\ldots,s_{K})
  \mathbf{A}_{\mathcal{I}}
  =
  (c_{i_1},c_{i_2},\ldots,c_{i_K}). \label{eq:submatrix}
\end{align}
If $\mathbf{A}_{\mathcal{I}}$ is the invertible matrix over $\mathbb{F}_2(z)$, we have
\begin{align*}
  (s_{1},s_{2},\ldots,s_{K})
  = 
  (c_{i_1},c_{i_2},\cdots,c_{i_K})
  \mathbf{A}_{\mathcal{I}}^{-1}.
\end{align*}

\begin{example}\label{ex:2}
  We assume that the generator matrix $\mathbf{A}$ is \eqref{eq:K=3}.
When we decode the source packets from the encoded packets $c_4(z),c_5(z),c_6(z)$, i.e, $\mathcal{I} = \{4,5,6\}$, we have
\begin{align*}
  \mathbf{A}_{\mathcal{I}}
  =
  \begin{pmatrix}
    1 & z & z \\
    z & 1 & z \\
    z & z & 1
  \end{pmatrix}.
\end{align*}
Then, the inverse matrix is
\begin{align}\label{eq:reverse}
  \mathbf{A}_{\mathcal{I}}^{-1}
  =
  \frac{1}{z+1}
  \begin{pmatrix}
    z+1 & z & z \\
    z & z+1 & z \\
    z & z & z+1 \\
  \end{pmatrix} 
  =:
  \frac{1}{z+1} \mathbf{B}.
\end{align}
Therefore, firstly, we calculate the following equation.
\begin{align*}
  \begin{pmatrix}
    b_1(z) & b_2(z) & b_3(z)
  \end{pmatrix}
  :=
  \begin{pmatrix}
      c_4(z) & c_5(z) & c_6(z)
    \end{pmatrix}
    \mathbf{B}.
  \end{align*}
From \eqref{eq:submatrix}, we get  
  \begin{align*}
    (c_4(z), c_5(z), c_6(z))
    =  (s_1(z), s_2(z), s_3(z))\mathbf{A}_{\mathcal{I}}.
  \end{align*}
  From those equations, we have 
  \begin{equation*}
    (b_1(z), b_2(z), b_3(z)) = (z+1) (s_1(z), s_2(z), s_3(z)).
  \end{equation*}

Secondly, we calculate $(s_1(z),s_2(z),\allowbreak s_3(z))$ from $(b_1(z),b_2(z),b_3(z))$.
Since $b_i(z) = (z+1)s_i(z)$, 
  \begin{equation}
    b_{i,t+1} = s_{i,t+1} + s_{i,t}, \label{eq:ex_const}
  \end{equation}
where $s_{i,0}=0$.
In particular, we have $b_{i,1} = s_{i,1}$, and we recover $s_{i,1}$.
Next, we have $b_{i,2}$ and $s_{i,1}$, and we recover $s_{i,2}$.
Similarly, if decoding succeeds until the $j$-th bit, we can compute the $(j+1)$-th bit by substituting $t=j$ into \eqref{eq:ex_const}.
This can be easily realized by feedback and a flip-flop.

Figure \ref{fig:kairo} depicts the decoding circuit for the example.
The boxes with $z$ in Fig.~\ref{fig:kairo} represent flip-flops.
The left side of Fig.~\ref{fig:kairo} computes of $z c_i(z)$ and $(z+1) c_i(z)$.
The middle of Fig.~\ref{fig:kairo} computes $b_1(z),b_2(z),b_3(z)$.
The right side of Fig.~\ref{fig:kairo} derives $s_1(z),s_2(z),s_3(z)$.
  
  \begin{figure}[tb]
    \centering
    \includegraphics[scale=0.3]{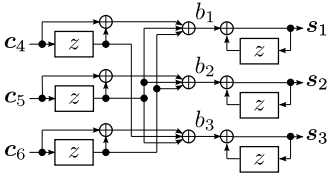}
    \caption{The circuit of the MAP decoding for Example \ref{ex:2}}
    \label{fig:kairo}
  \end{figure}
  
\end{example}
For a fixed ZD code and a set of indexes, if Zigzag decoding succeeds, the MAP decoding also succeeds.
Hence, if ZD code is sub-optimal, the MAP decoding algorithm can decode the source packets from arbitrary $K$ encoded packets.
That is, if a ZD code is sub-optimal, we have $\det \mathbf{A}_{\mathcal{I}} \not= 0$ for arbitrary set of indexes $\mathcal{I}$ with $|\mathcal{I}|=K$.
Conversely, if the submatrix $\mathbf{A}_{\mathcal{I}}$ satisfies $\det \mathbf{A}_{\mathcal{I}} \not= 0$ for arbitrary set of indexes $\mathcal{I}$ with $|\mathcal{I}|=K$, then the ZD code is sub-optimal under MAP decoding.

In general, we can rewrite $\mathbf{A}_{\mathcal{I}}^{-1}$ with $h(z) \in \mathbb{F}_2[z]$ and a matrix $\mathbf{B}$ over $\mathbb{F}_2[z]$, i.e,
\begin{equation*}
 \mathbf{A}_{\mathcal{I}}^{-1}
  =
 \frac{1}{h(z)} \mathbf{B}.
\end{equation*}
The polynomial $h(z)$ can be factorized by a monomial $z^t$ and irreducible polynomials with  constant term $1$ as follows:
\begin{equation*}
 h(z) = z^t h_1(z) h_2(z) \cdots h_s(z).
\end{equation*}
This $z^t$ means that the $t$ bits at the front of $s_i(z)$ are $0$.
Therefore, the $t$ bits at the front of $s_i(z)$ are removed in the decoding.
The polynomials $h_1(z),h_2(z),\dots, h_s(z)$ are realized by feed back and flip-flops.
Moreover, $h(z)$ is constructed from the cascade of those filters.

The MAP decoding algorithm depicted in Fig.\ref{fig:kairo} starts from the left of the encoded packets.
Hence, the source packets can be decoded from the left.

The zigzag decoding algorithm is to search an encoded packet that has an exposed bit, which can be directly read out.
After that, the bit is subtracted from other encoded packets.
The procedure repeats until all source packets are decoded.
In other words, the bits of $\bm{s}_1$ and $\bm{s}_2$ {\it cannot} be decoded in parallel under zigzag decoding.
The MAP decoding algorithm by using a circuit can decode $\bm{s}_1,\bm{s}_2,\dots,\bm{s}_K$ in parallel, i.e, the MAP decoding algorithm is the parallel decoding algorithm.

\section{Erasure Correcting Codes by Using Shift Operation and Exclusive OR} \label{sec:pro}

In this section, we propose the erasure correcting codes by using shift operation and XOR which is named the shift and XOR (SXOR) code.
Moreover, we propose a systematic SXOR code.

\subsection{SXOR codes}
Let $\mathbb{F}_{2^m}$ be the finite field of order $2^m$.
Let $z$ be a primitive element of $\mathbb{F}_{2^m}$ and $g(z)$ a primitive polynomial of which root is $z$. 
Assume that $K\leq N \leq 2^m-1$, where $m$ is a positive integer.
Let $\mathbf{V} = (v_{i,j})$ be a $K\times N$ Vandermonde matrix defined over $\mathbb{F}_{2^m}$, i.e, $v_{i,j} = z^{(i-1)(j-1)}$.
We denote the remainder derived from division of polynomial $a(x)$ by $g(z)$, by $\langle a \rangle$.

\begin{construction}[SXOR code] \label{con:1}
  Each element of a generator matrix for an SXOR code is a polynomial remainder derived from the division of the corresponding element of $\mathbf{V}$ by $g(z)$.
  In other words, the $K \times N$ generator matrix $\mathbf{A} = (a_{i,j})$ for the SXOR code satisfies $a_{i,j} = \langle z^{(i-1)(j-1)} \rangle$.
\end{construction}

Notice that the maximum overhead is determined from the maximum degree of the elements in generator matrix.
Hence, the maximum overhead is $m-1$.
Recall that $m=\lceil \log_2 (N+1) \rceil$, where the symbol $\lceil a \rceil$ denotes the ceiling function, which is the integer obtained by rounding up.
Therefore, the maximum overhead $\ell_{\max}$ is given by $\lceil \log_2(N+1) - 1 \rceil$.

\begin{example}\label{ex:4}
  Let $K=3$, $ N=7$ and $g(z)=z^3+z+1$.
  The generator matrix $\mathbf{A}$ is given by
    \begin{align*}
  \begin{small}
      \mathbf{A}
      \!=\!
      \begin{pmatrix}
        \!1\! & \!1\! & \!1\! & \!1\! & \!1\! & \!1\! & \!1\! \\
        \!1\! & \!z\! & \!z^2\! & \!z\!+\!1\! & \!z^2\!+\!z\! & \!z^2\!+\!z\!+\!1\! & \!z^2\!+\!1\! \\
        \!1\! & \!z^2\! & \!z^2\!+\!z\! & \!z^2\!+\!1\! & \!z\! & \!z\!+\!1\! & \!z^2\!+\!z\!+\!1\!
      \end{pmatrix}.
  \end{small}
    \end{align*}
From this, we see that the maximum degree, i.e, the maximum overhead, is $2$.
\end{example}

Hereafter, we denote the $K\times K$ submatrix of $\mathbf{V}$ obtained by choosing columns in the set $\mathcal{I}$, by $\mathbf{V}_{\mathcal{I}}$.
\begin{theorem} \label{the:0}
 The code in Construction \ref{con:1} is sub-optimal.
\end{theorem}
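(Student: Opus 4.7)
The plan is to reduce the sub-optimality condition to the non-vanishing of $K\times K$ Vandermonde minors over $\mathbb{F}_{2^m}$, and transport that fact back to $\mathbb{F}_2[z]$ via the quotient homomorphism $\mathbb{F}_2[z]\to\mathbb{F}_2[z]/(g(z))\cong\mathbb{F}_{2^m}$. By the discussion preceding the theorem, it suffices to show that $\det\mathbf{A}_{\mathcal{I}}\neq 0$ in $\mathbb{F}_2[z]$ for every index set $\mathcal{I}$ with $|\mathcal{I}|=K$, since in that case $\mathbf{A}_{\mathcal{I}}$ is invertible over $\mathbb{F}_2(z)$ and MAP decoding recovers the source packets from any $K$ encoded packets.

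First, I would observe that the columns of $\mathbf{V}$ are indexed by the distinct elements $1,z,z^2,\ldots,z^{N-1}$ of $\mathbb{F}_{2^m}^\ast$ (distinct because $z$ is primitive of order $2^m-1\geq N$), so $\mathbf{V}_{\mathcal{I}}$ is a Vandermonde matrix over $\mathbb{F}_{2^m}$ in $K$ distinct nonzero evaluation points. The classical Vandermonde determinant formula then yields $\det\mathbf{V}_{\mathcal{I}}\neq 0$ in $\mathbb{F}_{2^m}$.

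Next, let $\pi\colon\mathbb{F}_2[z]\to\mathbb{F}_{2^m}$ denote the canonical ring homomorphism sending $f(z)$ to $\langle f\rangle$; its kernel is $(g(z))$. By Construction~\ref{con:1} we have $\pi(a_{i,j})=\pi(\langle z^{(i-1)(j-1)}\rangle)=z^{(i-1)(j-1)}=v_{i,j}$. Because the determinant is a polynomial expression in the entries and $\pi$ is a ring homomorphism, applying $\pi$ entry-wise to $\mathbf{A}_{\mathcal{I}}$ commutes with taking the determinant, giving
\begin{equation*}
\pi\bigl(\det\mathbf{A}_{\mathcal{I}}\bigr)=\det\bigl(\pi(\mathbf{A}_{\mathcal{I}})\bigr)=\det\mathbf{V}_{\mathcal{I}}\neq 0.
\end{equation*}
Consequently $\det\mathbf{A}_{\mathcal{I}}$ cannot be the zero polynomial (it is not even divisible by $g(z)$), so $\mathbf{A}_{\mathcal{I}}$ is invertible over $\mathbb{F}_2(z)$ for every $\mathcal{I}$.

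Combining this with the remark in Section~\ref{sec:decoding} that non-singularity of every $K\times K$ submatrix over $\mathbb{F}_2(z)$ is equivalent to sub-optimality under MAP decoding, the construction is sub-optimal. There is no real obstacle here; the only subtle point is noting that reduction modulo $g(z)$ commutes with the determinant, which allows the Vandermonde non-vanishing over $\mathbb{F}_{2^m}$ to imply non-vanishing of a polynomial determinant in $\mathbb{F}_2[z]$ rather than just in $\mathbb{F}_{2^m}$.
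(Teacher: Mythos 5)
Your proof is correct and takes essentially the same route as the paper: both reduce the nonvanishing of $\det\mathbf{A}_{\mathcal{I}}$ in $\mathbb{F}_2[z]$ to the nonvanishing of the Vandermonde determinant $\det\mathbf{V}_{\mathcal{I}}$ in $\mathbb{F}_{2^m}$ via the quotient modulo $g(z)$. Your explicit invocation of the ring homomorphism $\pi\colon\mathbb{F}_2[z]\to\mathbb{F}_{2^m}$ commuting with the determinant makes rigorous the step the paper summarizes as ``by the properties of Vandermonde matrix and remainder,'' but the underlying argument is the same.
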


\begin{proof}
To prove this, we will show that the submatrix $\mathbf{A}_{\mathcal{I}}$ for the arbitrary set of indexes $\mathcal{I}$ with $|\mathcal{I}|=K$ satisfies $\det \mathbf{A}_{\mathcal{I}} \not = 0$. 
By the properties of Vandermonde matrix, we have
\begin{align*}
  \det \mathbf{V}_{\mathcal{I}} 
  = 
  {\textstyle \prod_{1\leq a < b \leq K}} \bigl(z^{i_b-1}-z^{i_a-1}\bigr).
\end{align*}
Note that primitive polynomials are irreducible.
By the properties of Vandermonde matrix and remainder, we have
\begin{align*}
  \langle \det \mathbf{A}_{\mathcal{I}} \rangle 
  = 
  \left\langle {\textstyle \prod_{1\leq a < b \leq K}}
  \bigr(z^{i_b-1}-z^{i_a-1}\bigr) \right\rangle 
  = 
  r(z),
\end{align*}
where $r(z)$ be a nonzero polynomial of degree less than $m$.
From the above, we can write with a polynomial $d(z)\in \mathbb{F}_2[z]$
\begin{align*}
\det \mathbf{A}_{\mathcal{I}} = d(z)g(z)+r(z) \neq 0.
\end{align*}
Hence we obtain the theorem.
\end{proof}

\subsection{Systematic SXOR Codes \label{sec:sys}}
Let $\bm{x} = (x_1, x_2, \dots, x_K)$ be a sequence over $\{1, 2, \dots, N\}$ which satisfies $x_i \not = x_j$ (for $i \not = j$).
Let $\mathbf{V}_{\bm{x}}$ be the $K\times K$ submatrix of which the $i$-th column equals to the $x_i$-th column of $\mathbf{V}$.
\begin{example}
  Let $K=3$ and $N=7$.
  For the sequence $\bm{x}=(1,3,4)$ and $\bm{y}=(1,4,3)$, we have
   \begin{align*}
    \mathbf{V}_{\bm{x}}=
    \begin{pmatrix}
      1 & 1 & 1 \\
      1 & z^2 & z^3 \\
      1 & z^4 & z^6 
    \end{pmatrix},    
    \mathbf{V}_{\bm{y}}=
    \begin{pmatrix}
      1 & 1 & 1 \\
      1 & z^3 & z^2 \\
      1 & z^6 & z^4 
    \end{pmatrix}.    
  \end{align*}
\end{example}

By the properties of Vandermonde matrix, the inverse matrix $\mathbf{V}_{\bm{x}}^{-1}$ always exists.

\begin{construction}[Systematic SXOR code] \label{con:2}
  Define
  \begin{align} \label{eq:tildeA}
    \mathbf{\tilde{A}}_{\bm{x}}
    &:=\mathbf{V}_{\bm{x}}^{-1}\mathbf{V}=
      \begin{pmatrix}
        \tilde{a}_{i,j}(z) 
      \end{pmatrix}
  \end{align}
  Similar to the Construction \ref{con:1}, each element of a generator matrix is a polynomial remainder derived from the division of corresponding element of $\mathbf{\tilde{A}}_{\bm{x}}$ by $g(z)$.
  In other words, the systematic SXOR code is generated from an $K\times N$ generator matrix $\mathbf{A}_{\bm{x}} = (\langle \tilde{a}_{i,j}(z) \rangle)$.
\end{construction}

\begin{example}
  Let $K=3$, $N=7$, $g(z)=z^3+z+1$ and $\bm{x}=(1,3,4)$.
  The submatrix $\mathbf{V}_{\bm{x}}$ and inverse matrix $\mathbf{V}_{\bm{x}}^{-1}$ is given by
  \begin{align*}
    \mathbf{V}_{\bm{x}}^{-1}=
    \begin{pmatrix}
      z^5 & z^5 & 1 \\
      z^6 & z^4 & z^3 \\
      z^3 & 1 & z 
    \end{pmatrix}.    
  \end{align*}
Compute $\mathbf{\tilde{A}}_{\bm{x}}=\mathbf{V}_{\bm{x}}^{-1}\mathbf{V}$.
Then, the generator matrix $\mathbf{A}_{\bm{x}}$ is
  \begin{align*}
    \mathbf{A}_{\bm{x}}
    \!=\!
    \begin{pmatrix}
    1 \!&\! z^2+z \!&\! 0 \!&\! 0 \!&\! 1 \!&\! z^2+z+1 \!&\! z^2+z \\
    0 \!&\! z^2+1 \!&\! 1 \!&\! 0 \!&\! 1 \!&\! z^2 \!&\! z^2 \\
    0 \!&\! z \!&\! 0 \!&\! 1 \!&\! 1 \!&\! z \!&\! z+1 
    \end{pmatrix}.
  \end{align*}
  From the first, third, and fourth column of $\mathbf{A}_{\bm{x}}$, we see that the encoded packets store the source packets.
  This means that the code is systematic.
\end{example}

\begin{theorem} \label{the:1}
   The code in Construction \ref{con:2} is sub-optimal.
\end{theorem}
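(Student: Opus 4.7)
The plan is to follow the template of the proof of Theorem \ref{the:0}: reduce the nonvanishing of $\det (\mathbf{A}_{\bm{x}})_{\mathcal{I}}$ in $\mathbb{F}_2[z]$ to the nonvanishing of the corresponding determinant in the field $\mathbb{F}_{2^m}\cong\mathbb{F}_2[z]/(g(z))$, and verify the latter using the invertibility of the Vandermonde factors.

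First, I would observe that every entry $\tilde{a}_{i,j}(z)$ of $\mathbf{\tilde{A}}_{\bm{x}} = \mathbf{V}_{\bm{x}}^{-1}\mathbf{V}$ is an element of $\mathbb{F}_{2^m}$ and hence is already represented by a unique polynomial of degree less than $m$, so the reduction $\langle \tilde{a}_{i,j}(z)\rangle$ in Construction \ref{con:2} changes nothing: $\mathbf{A}_{\bm{x}}$ is literally the matrix of canonical polynomial representatives of $\mathbf{\tilde{A}}_{\bm{x}}$.

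Next, fix an arbitrary index set $\mathcal{I}$ with $|\mathcal{I}|=K$. Because restricting to the columns of $\mathcal{I}$ commutes with left multiplication, $(\mathbf{\tilde{A}}_{\bm{x}})_{\mathcal{I}} = \mathbf{V}_{\bm{x}}^{-1}\mathbf{V}_{\mathcal{I}}$, whence
\[
 \det (\mathbf{\tilde{A}}_{\bm{x}})_{\mathcal{I}} = (\det\mathbf{V}_{\bm{x}})^{-1}\det\mathbf{V}_{\mathcal{I}}\qquad \text{in } \mathbb{F}_{2^m}.
\]
Both Vandermonde determinants are nonzero in $\mathbb{F}_{2^m}$, since the primitive element $z$ has distinct powers $z^0,z^1,\ldots,z^{2^m-2}$ while the components of $\bm{x}$ (respectively $\mathcal{I}$) are distinct elements of $\{1,\ldots,N\}\subseteq\{1,\ldots,2^m-1\}$. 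This is exactly the computation used in the proof of Theorem \ref{the:0}.

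Finally, since the determinant is a polynomial function of the entries and reduction modulo $g(z)$ is a ring homomorphism, the image of $\det (\mathbf{A}_{\bm{x}})_{\mathcal{I}}\in\mathbb{F}_2[z]$ in $\mathbb{F}_{2^m}$ coincides with $\det (\mathbf{\tilde{A}}_{\bm{x}})_{\mathcal{I}}$. Because the latter is nonzero, $\det (\mathbf{A}_{\bm{x}})_{\mathcal{I}}$ cannot be the zero polynomial; writing $\det (\mathbf{A}_{\bm{x}})_{\mathcal{I}} = d(z)g(z) + r(z)$ with $r(z)\neq 0$ of degree less than $m$ mirrors the concluding display of Theorem \ref{the:0} and yields the claim by the sub-optimality criterion recalled in Section \ref{sec:decoding}. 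I do not anticipate a genuine obstacle; the only delicate point is to keep the two-level arithmetic straight—entries of $\mathbf{A}_{\bm{x}}$ live in $\mathbb{F}_2[z]$, while the Vandermonde cancellation lives in $\mathbb{F}_{2^m}$—and the bridge is simply that the quotient map is a homomorphism and therefore commutes with determinants.
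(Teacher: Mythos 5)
Your argument is correct and is exactly the adaptation of the Theorem~\ref{the:0} proof that the paper has in mind when it states that Theorem~\ref{the:1} ``is proven in a similar way'': you factor $\det(\mathbf{\tilde A}_{\bm x})_{\mathcal I}=(\det\mathbf V_{\bm x})^{-1}\det\mathbf V_{\mathcal I}$ over $\mathbb F_{2^m}$, observe that both Vandermonde factors are nonzero because the powers of the primitive element $z$ indexed by the distinct entries of $\bm x$ (resp.\ $\mathcal I$) are distinct, and lift back to $\mathbb F_2[z]$ via the quotient homomorphism, mirroring the concluding display $\det\mathbf A_{\mathcal I}=d(z)g(z)+r(z)\neq 0$. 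No gaps.
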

This theorem is proven in a similar way to Theorem \ref{the:0}.


The generator matrix of a systematic SXOR code depends on $\bm{x}$.
Hence, we see that there are $\frac{N!}{(N-K)!}$ generator matrixes.
However, those matrixes can be classified into several classes.

A sequence $\bm{x} = (x_1, s_2, \cdots, s_K)$ is {\it equivalent} to $\bm{y} = (y_1, y_2, \cdots, y_K)$ if for all $i$ there exists only one $j$ such that $x_i=y_j$.
In other words, $\bm{x}$ and $\bm{y}$ are equivalent if we can write $x_i = y_{\sigma(i)}$ by a permutation $\sigma$ over $\{1, 2, \cdots, K\}$.
The two generator matrixes $\mathbf{A}_{\bm{x}}$ and $\mathbf{A}_{\bm{y}}$ are {\it equivalent} if $\mathbf{A}_{\bm{x}}$ can be transformed into $\mathbf{A}_{\bm{y}}$ by using row permutation and column permutation.
\begin{theorem} \label{the:2}
  If $\bm{x}$ and $\bm{y}$ are equivalent, then $\mathbf{A}_{\bm{x}}$ and $\mathbf{A}_{\bm{y}}$ are equivalent.
\end{theorem}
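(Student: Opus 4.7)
The plan is to translate the sequence equivalence $x_i=y_{\sigma(i)}$ into a matrix identity relating $\mathbf{V}_{\bm{x}}$ and $\mathbf{V}_{\bm{y}}$, and then push it through the definition \eqref{eq:tildeA} and the entry-wise remainder operation. Concretely, let $\sigma$ be the permutation on $\{1,\dots,K\}$ witnessing the equivalence, and let $P_{\sigma}$ be the corresponding $K\times K$ permutation matrix (say, $(P_{\sigma})_{\sigma(i),i}=1$). Since the $i$-th column of $\mathbf{V}_{\bm{x}}$ is the $x_i$-th column of $\mathbf{V}$, which equals the $y_{\sigma(i)}$-th column of $\mathbf{V}$, i.e.\ the $\sigma(i)$-th column of $\mathbf{V}_{\bm{y}}$, I would first record the identity $\mathbf{V}_{\bm{x}}=\mathbf{V}_{\bm{y}}P_{\sigma}$.

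Next I would invert both sides to get $\mathbf{V}_{\bm{x}}^{-1}=P_{\sigma}^{-1}\mathbf{V}_{\bm{y}}^{-1}$, noting that both inverses exist by the Vandermonde property used already in the paper. Multiplying on the right by $\mathbf{V}$ and applying \eqref{eq:tildeA} gives
\begin{equation*}
\mathbf{\tilde{A}}_{\bm{x}}=\mathbf{V}_{\bm{x}}^{-1}\mathbf{V}=P_{\sigma}^{-1}\mathbf{V}_{\bm{y}}^{-1}\mathbf{V}=P_{\sigma}^{-1}\mathbf{\tilde{A}}_{\bm{y}},
\end{equation*}
so $\mathbf{\tilde{A}}_{\bm{x}}$ is obtained from $\mathbf{\tilde{A}}_{\bm{y}}$ by permuting rows according to $\sigma^{-1}$.

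The final step is to observe that entry-wise reduction modulo $g(z)$ commutes with any row (or column) permutation, because a permutation merely reorders entries without combining them. Hence, applying $\langle\cdot\rangle$ entry-wise to both sides of the previous display yields $\mathbf{A}_{\bm{x}}=P_{\sigma}^{-1}\mathbf{A}_{\bm{y}}$, which is exactly the statement that $\mathbf{A}_{\bm{x}}$ and $\mathbf{A}_{\bm{y}}$ differ only by a row permutation and are therefore equivalent in the sense of the definition preceding the theorem (no column permutation is even needed in this direction).

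There is no real obstacle in the argument; the only point that deserves care is the bookkeeping of which permutation acts on which side. I would double-check the direction of $P_{\sigma}$ by evaluating it on a simple case such as the $\bm{x}=(1,3,4)$, $\bm{y}=(1,4,3)$ example already appearing in Section \ref{sec:sys}, to make sure the convention chosen for $P_{\sigma}$ produces $\mathbf{V}_{\bm{x}}=\mathbf{V}_{\bm{y}}P_{\sigma}$ rather than $\mathbf{V}_{\bm{x}}=P_{\sigma}\mathbf{V}_{\bm{y}}$, which would flip row and column roles.
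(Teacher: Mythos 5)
Your proposal is correct and follows essentially the same route as the paper: observe that $\mathbf{V}_{\bm{x}}$ and $\mathbf{V}_{\bm{y}}$ differ by a permutation, pass to inverses, multiply by $\mathbf{V}$, and reduce entry-wise modulo $g(z)$. Your explicit permutation-matrix bookkeeping is in fact slightly more careful than the paper's index-level notation (which appears to label the permutation as acting on rows of $\mathbf{V}_{\bm{x}}$ when it should act on columns), and correctly identifies that $\mathbf{\tilde{A}}_{\bm{x}}$ and $\mathbf{\tilde{A}}_{\bm{y}}$ are related by a row permutation alone.
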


\begin{proof}
We denote the submatrix $\mathbf{V}_{\bm{x}}$ and $\mathbf{V}_{\bm{x}}^{-1}$, by
$\mathbf{V}_{\bm{x}} 
 = 
 \begin{pmatrix} \bar{v}_{i,j} \end{pmatrix}$, 
$\mathbf{V}_{\bm{x}}^{-1}=
  \begin{pmatrix} \hat{v}_{i,j} \end{pmatrix}$.
Since $\bm{x}$ and $\bm{y}$ are equivalent, we have
$\mathbf{V}_{\bm{y}}=
  \begin{pmatrix}  \bar{v}_{\sigma(i),j} \end{pmatrix}$,
$\mathbf{V}_{\bm{y}}^{-1}=
  \begin{pmatrix} \hat{v}_{i,\sigma(j)} \end{pmatrix}$.
Note that $\mathbf{V}_{\bm{x}}^{-1}$ and $\mathbf{V}_{\bm{y}}^{-1}$ are equivalent.
From
$\mathbf{V}_{\bm{x}}^{-1}\mathbf{V}=\mathbf{\tilde{A}}_{\bm{x}}$ and
$\mathbf{V}_{\bm{y}}^{-1}\mathbf{V}=\mathbf{\tilde{A}}_{\bm{y}}$,
$\mathbf{\tilde{A}}_{\bm{x}}$ and $\mathbf{\tilde{A}}_{\bm{y}}$ are equivalent.
From Construction \ref{con:2}, $\mathbf{A}_{\bm{x}}$ and $\mathbf{A}_{\bm{y}}$ are equivalent.
\end{proof}

From Theorem \ref{the:2}, the generator matrixes can be classified into $\frac{N!}{(N-K)!K!}=\binom{N}{K}$ classes.

\begin{example} 
  Let $K=3$, $N=7$ and $g(z)=z^3+z+1$.
  We will enumerate the generator matrix $\mathbf{A}_{\bm{x}}$ with sequence with entries $1$, $3$ and $4$.
To simplify the notation, we denote the first, second, and third row of $\mathbf{A}_{(1,3,4)}$, by $\bar{\bm{a}}_1$, $\bar{\bm{a}}_3$, and $\bar{\bm{a}}_4$, respectively.
  Then, the row vector $\bar{\bm{a}}_1$, $\bar{\bm{a}}_3$ and $\bar{\bm{a}}_4$ are
  \begin{align*}
    \bar{\bm{a}}_1
    &=
    \begin{pmatrix}
      1 & z^2+z & 0 & 0 & 1 & z^2+z+1 & z^2+z  
    \end{pmatrix}, \\
    \bar{\bm{a}}_3
    &=
    \begin{pmatrix}
      0 & z^2+1 & 1 & 0 & 1 & z^2 & z^2 
    \end{pmatrix}, \\
    \bar{\bm{a}}_4
    &=
    \begin{pmatrix}
      0 & z & 0 & 1 & 1 & z & z+1
    \end{pmatrix}.
  \end{align*}
  Then, we obtain the following results:
  \begin{align*}
    \mathbf{A}_{(1,3,4)}=
    \begin{pmatrix}
      \bar{\bm{a}}_1 \\ \bar{\bm{a}}_3 \\ \bar{\bm{a}}_4
    \end{pmatrix},
    \mathbf{A}_{(1,4,3)}=
    \begin{pmatrix}
      \bar{\bm{a}}_1 \\ \bar{\bm{a}}_4 \\ \bar{\bm{a}}_3
    \end{pmatrix},
    \mathbf{A}_{(4,1,3)}=
    \begin{pmatrix}
      \bar{\bm{a}}_4 \\ \bar{\bm{a}}_1 \\ \bar{\bm{a}}_3
    \end{pmatrix}, \\
    \mathbf{A}_{(4,3,1)}=
    \begin{pmatrix}
      \bar{\bm{a}}_4 \\ \bar{\bm{a}}_3 \\ \bar{\bm{a}}_1
    \end{pmatrix},
      \mathbf{A}_{(3,1,4)}=
    \begin{pmatrix}
      \bar{\bm{a}}_3 \\ \bar{\bm{a}}_1 \\ \bar{\bm{a}}_4
    \end{pmatrix},
    \mathbf{A}_{(3,4,1)}=
    \begin{pmatrix}
      \bar{\bm{a}}_3 \\ \bar{\bm{a}}_4 \\ \bar{\bm{a}}_1
    \end{pmatrix}.
  \end{align*}
  From the above, we confirm that each entry of $\bm{x}$ corresponds to the index of row vector.
\end{example}

In particular, for $N=2^m-1$, we can reduce the number of classes of the generator matrixes.
For ${\bm x} = (x_1,x_{2},\dots,x_{K})$ and a positive interger $k$ $(1 \leq k < N)$, we define ${\bm y} = \bm{x} + \bm{k} = (y_1, y_2,\dots, y_K)$ as follows:
\begin{equation*}
 y_i 
 =
 \begin{cases}
  x_i + k, & \text{if}~ x_i+k \le N, \\
  x_i + k - N, & \text{otherwise}.
 \end{cases}
\end{equation*}

\begin{theorem} \label{the:3}
  Assume that $N=2^m-1$.
  Then, $\mathbf{A}_{\bm{x}}$ and $\mathbf{A}_{\bm{x}+\bm{k}}$ are equivalent for all positive integer $k$.
\end{theorem}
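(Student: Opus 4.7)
The plan is to reduce the claim to a single column permutation: I will show that $\mathbf{A}_{\bm{x}+\bm{k}}$ is obtained from $\mathbf{A}_{\bm{x}}$ by a cyclic permutation of its columns, which is already a special case of equivalence in the sense of Theorem \ref{the:2}. The decisive ingredient is that $N = 2^m - 1$ forces $z^N = 1$ in $\mathbb{F}_{2^m}$, so the map $j \mapsto j + k$, interpreted modulo $N$ on $\{1,\ldots,N\}$, is a well-defined permutation of the column indices of $\mathbf{V}$.

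First I will introduce the $N \times N$ permutation matrix $\mathbf{P}_k$ realizing this cyclic shift on the columns of $\mathbf{V}$, together with the diagonal matrix $\mathbf{D}_k := \diag(1, z^k, z^{2k}, \ldots, z^{(K-1)k})$, and verify the identity
\begin{equation*}
\mathbf{V}\mathbf{P}_k = \mathbf{D}_k \mathbf{V}.
\end{equation*}
This is a column-by-column computation using $v_{i,\,j+k} = z^{(i-1)(j+k-1)} = z^{(i-1)k}\,v_{i,j}$; the only subtlety is the wraparound case $j + k > N$, which is absorbed by $z^N = 1$.

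Restricting to the columns selected by $\bm{x}$, the same scaling argument gives $\mathbf{V}_{\bm{x}+\bm{k}} = \mathbf{D}_k \mathbf{V}_{\bm{x}}$, and therefore $\mathbf{V}_{\bm{x}+\bm{k}}^{-1} = \mathbf{V}_{\bm{x}}^{-1}\mathbf{D}_k^{-1}$. Substituting into the definition \eqref{eq:tildeA} and using $\mathbf{D}_k^{-1}\mathbf{V} = \mathbf{V}\mathbf{P}_k^{-1}$ yields
\begin{equation*}
\mathbf{\tilde{A}}_{\bm{x}+\bm{k}} = \mathbf{V}_{\bm{x}+\bm{k}}^{-1}\mathbf{V} = \mathbf{V}_{\bm{x}}^{-1}\mathbf{V}\mathbf{P}_k^{-1} = \mathbf{\tilde{A}}_{\bm{x}}\mathbf{P}_k^{-1}.
\end{equation*}
Since the reduction $\langle \cdot \rangle$ modulo $g(z)$ is applied entry-wise, it commutes with any column permutation, so $\mathbf{A}_{\bm{x}+\bm{k}} = \mathbf{A}_{\bm{x}}\mathbf{P}_k^{-1}$, proving equivalence.

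I do not anticipate a serious obstacle: the argument is essentially the observation that a Vandermonde matrix indexed by a cyclic group is covariant under the group action. The one point that needs care is distinguishing $\mathbf{P}_k$ acting on all $N$ columns of $\mathbf{V}$ from the induced shift on the $K$ selected indices $\bm{x}$; the clean derivation above keeps $\mathbf{V}$ full-width so that the $\mathbf{V}_{\bm{x}}^{-1}$ factor passes through unchanged and no mismatch in sizes arises.
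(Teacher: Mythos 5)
Your proof is correct and takes essentially the same route as the paper's: both exploit $z^N=1$ to write the cyclic shift as a diagonal scaling $\mathbf{V}_{\bm{x}+\bm{k}} = \mathbf{D}_k\mathbf{V}_{\bm{x}}$ and then observe that $\mathbf{V}_{\bm{x}}^{-1}\mathbf{D}_k^{-1}\mathbf{V}$ is a column permutation of $\mathbf{\tilde{A}}_{\bm{x}}$. The only difference is presentational: you introduce the permutation matrix $\mathbf{P}_k$ and handle general $k$ in one step (and explicitly note that the entry-wise reduction $\langle\cdot\rangle$ commutes with column permutation), whereas the paper writes out the matrix product for $k=1$ and iterates.
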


\begin{proof}
Write $\diag(1, z, z^2 ,\cdots, z^{K-1})$ for a diagonal matrix whose diagonal entries starting in the upper left corner are $1, z, z^2 ,\cdots, z^{K-1}$. 
From the properties of the Vandermonde matrix, we have for $k=1$
\begin{align} \label{eq:the:3}
  \mathbf{V}_{\bm{x}+\bm{1}}=
  \diag(1, z, z^2 ,\cdots, z^{K-1})
  \mathbf{V}_{\bm{x}}.
\end{align}
From \eqref{eq:tildeA} and \eqref{eq:the:3}, the inverse matrix $\mathbf{V}_{\bm{x}+\bm{1}}^{-1}$ is given by
\begin{align*}
  \mathbf{V}_{\bm{x}+\bm{1}}^{-1}=
  \mathbf{V}_{\bm{x}}^{-1}
  \diag(1, z^{N-1}, z^{N-2} ,\cdots, z^{N-K+1}).
\end{align*}
From the above and \eqref{eq:tildeA}, we have
{\footnotesize
  \begin{align*}
    \mathbf{\tilde{A}}_{\bm{x}+\bm{1}}
    \!\! &= \!\!
    \mathbf{V}_{\bm{x}}^{-1} 
    \begin{pmatrix}
      \!1\! \!&\! \!0\! \!&\! \cdots \!&\! \!0\! \\
      \!0\! \!&\! \!z\! \!&\! \cdots \!&\! \!0\! \\
      \!\vdots\! \!&\! \!\vdots\! \!&\! \!\ddots\! \!&\! \!\vdots\! \\
      \!0\! \!&\! \!0\! \!&\! \!\cdots\! \!&\! \!z^{K-1}\!
    \end{pmatrix}^{\!\!-\!1}\!\!
    \begin{pmatrix}
      \!1\! \!&\! \!1\! \!&\! \!\cdots\! \!&\! \!1\! \\
      \!1\! \!&\! \!z\! \!&\! \!\cdots\! \!&\! \!z^{(N-1)}\! \\
      \!\vdots\! \!&\! \!\vdots\! \!&\! \!\vdots\! \!&\! \!\vdots\! \\
      \!1\! \!&\! \!\! z^{(K-1)} \!\! \!&\! \cdots \!&\! \!\! z^{(N-1)(K-1)} \!\!
    \end{pmatrix} \!\!\!\\
    &=
    \mathbf{V}_{\bm{x}}^{-1}
    \begin{pmatrix}
      \!1\! \!&\! \!1\! \!&\! \!1\! \!&\! \!\cdots\! \!&\! \!1\! \!&\! \!1\! \\
      \!z^{(\!N\!-\!1\!)}\! \!&\! \!1\! \!&\! \!z\! \!&\! \!\cdots\! \!&\! \!z^{(\!N\!-\!3\!)}\! \!&\! \!z^{(\!N\!-\!2\!)}\! \\
      \!z^{2(\!N\!-\!1\!)}\! \!&\! \!1\! \!&\! \!z^2\! \!&\! \!\cdots\! \!&\! \!z^{2(\!N\!-\!3\!)}\! \!&\! \!z^{2(\!N\!-\!2\!)}\! \\
      \!\vdots\! \!&\! \!\vdots\! \!&\! \!\vdots\! \!&\! \!\vdots\! \!&\! \!\vdots\! \!&\! \!\vdots\! \\
      \!z^{(\!N\!-\!1\!)(\!K\!-\!1\!)}\! \!&\! \!1\! \!&\! \!z^{(\!K\!-\!1\!)}\! \!&\! \!\cdots\! \!&\! \!z^{(\!N\!-\!3\!)(\!K\!-\!1\!)}\! \!&\! \!z^{(\!N\!-\!2\!)(\!K\!-\!1\!)}\!
    \end{pmatrix}.
\end{align*}}
From this, the second factor of the equation is equivalent to $\mathbf{V}$.
Hence, $\mathbf{\tilde{A}}_{\bm{x}}$ is equivalent to $\mathbf{\tilde{A}}_{\bm{x}+\bm{1}}$.
By similar arguments to the above, we can also show that for $\bm{x}, \bm{x}+\bm{1}, \cdots, \bm{x}+\bm{k}$.
Hence we obtain the theorem.
\end{proof}

\begin{example}
  Let $K=3$, $N=7$, $\bm{x}=(1,3,4)$ and $g(z)=z^3+z+1$.
  We will enumerate the generator matrix $\mathbf{A}_{\bm{x}}$ with sequence with $\bm{x}, \bm{x}+\bm{1}, \cdots, \bm{x}+\bm{k}$.
  Denote the $i$-th column of $\mathbf{A}_{\bm{x}}$, by $\bm{a}_i$.
    We obtain the following results:
    \begin{align*}
      \mathbf{A}_{\bm{x}+\bm{1}}
      &=
      \begin{pmatrix}
        \bm{a}_7 & \bm{a}_1 & \bm{a}_2 & \bm{a}_3 & \bm{a}_4 & \bm{a}_5 & \bm{a}_6 
      \end{pmatrix}, \\
      \mathbf{A}_{\bm{x}+\bm{2}}
      &=
      \begin{pmatrix}
        \bm{a}_6 & \bm{a}_7 & \bm{a}_1 & \bm{a}_2 & \bm{a}_3 & \bm{a}_4 & \bm{a}_5 
      \end{pmatrix}, \\
      &\hspace{10mm}\vdots  \\
      \mathbf{A}_{\bm{x}+\bm{6}}
      &=
      \begin{pmatrix}
        \bm{a}_2 & \bm{a}_3 & \bm{a}_4 & \bm{a}_5 & \bm{a}_6 & \bm{a}_7 & \bm{a}_1 
      \end{pmatrix}.
    \end{align*}
    From the above, we confirm that $\mathbf{A}_{\bm{x}+\bm{k}}$ is the $k$ right cyclic shift of $\mathbf{A}_{\bm{x}}$.
\end{example}

\section{Performance Evaluation \label{sec:per}}

In this section, we evaluate the encoding complexity and overhead for SXOR codes.
Section \ref{subsec:ex} shows that the complexity and overhead are depended on the sequence $\bm{x}$ and primitive polynomial $g(z)$.
Section \ref{subsec:compar} compares the complexity and overhead for the SXOR codes, systematic SXOR codes and Hankel matrix based ZD codes \cite{ZigZag3}.

\subsection{Dependency to $\bm{x}$ and $g(z)$ \label{subsec:ex}}

We denote the two primitive polynomials $g(z)$ of degree $3$, by $g_1(z)=z^3+z+1$ and $g_2(z)=z^3+z^2+1$.
We assume $K=3$, $N=7$.
In this case, from the result in the previous section, the generator matrixes are classified into five classes.
We denote the five representatives, by $\mathbf{A}_{(1,2,3)}$, $\mathbf{A}_{(1,2,4)}$, $\mathbf{A}_{(1,2,5)}$, $\mathbf{A}_{(1,3,4)}$, and $\mathbf{A}_{(1,2,5)}$. 
Then, the five representatives for $g_1(z)$ are
  \begin{align*}
    &\begin{footnotesize}
    \mathbf{A}_{(1,2,3)}
    \!\! = \!\!
           \begin{pmatrix}
             \! 1 \! & \! 0 \! & \! 0 \! & \! z\!+\!1 \! & \! z \! & \! 1 \! &  \! z\!+\!1 \! \\
             \! 0 \! & \! 1 \! & \! 0 \! & \! z^2\!+\!1 \! & \! z^2\!+\!1 \! & \! 1 \! & \! z^2 \! \\
             \! 0 \! & \! 0 \! & \! 1 \! & \! z^2\!+\!z\!+\!1 \! & \! z^2\!+\!z \! & \! 1 \! & \! z^2\!+\!z \! \\
           \end{pmatrix}\!\!,
    \end{footnotesize}\\
    &\begin{footnotesize}
    \mathbf{A}_{(1,2,4)}
    \!\! = \!\!
           \begin{pmatrix}
             \! 1 \! & \! 0 \! & \! z^2\!+\!z\!+\!1 \! & \! 0 \! & \! z^2\!+\!z \! & \! z^2\!+\!z \! & \! z^2\!+\!z\!+\!1 \! \\
             \! 0 \! & \! 1 \! & \! z \! & \! 0 \! & \! z \! & \! z\!+\!1 \! & \! z\!+\!1 \! \\
             \! 0 \! & \! 0 \! & \! z^2 \! & \! 1 \! & \! z^2\!+\!1 \! & \! z^2 \! & \! z^2\!+\!1 \! \\
           \end{pmatrix}\!\!,
    \end{footnotesize}\\
    &\begin{footnotesize}
    \mathbf{A}_{(1,2,5)}
    \!\! = \!\!
           \begin{pmatrix}
             \! 1 \! & \! 0 \! & \! z^2\!+\!z \! & \! z^2\!+\!z\!+\!1 \! & \! 0 \! & \! z^2\!+\!z\!+\!1 \! & \! 1 \! \\
             \! 0 \! & \! 1 \! & \! z^2 \! & \! z^2 \! & \! 0 \! & \! z^2\!+\!1 \! & \! 1 \! \\
             \! 0 \! & \! 0 \! & \! z+1 \! & \! z \! & \! 1 \! & \! z+1 \! & \! 1 \! \\ 
           \end{pmatrix}\!\!,
    \end{footnotesize}\\
    &\begin{footnotesize}
    \mathbf{A}_{(1,3,4)}
    \!\! = \!\!
           \begin{pmatrix}
             \! 1 \! & \! z^2\!+\!z  \! & \! 0 \! & \! 0 \! & \! 1 \! & \! z^2\!+\!z\!+\!1 & \! z^2\!+\!z  \\
             \! 0 \! & \! z^2\!+\!1 \! & \! 1 \! & \! 0 \! & \! 1 \! & \! z^2 \! & \! z^2 \! \\
             \! 0 \! & \! z \! & \! 0 \! & \! 1 \! & \! 1 \! & \! z \! & \! z\!+\!1 \! \\
           \end{pmatrix}\!\!,
    \end{footnotesize}\\
    &\begin{footnotesize}
    \mathbf{A}_{(1,3,5)}
    \!\! = \!\!
           \begin{pmatrix}
             \! 1 \! & \! z^2 \! & \! 0 \! & \! 1 \! & \! 0 \! & \! z^2\!+\!1 \! & \! z^2\!+\!1 \! \\
             \! 0 \! & \! z^2\!+\!z\!+\!1 \! & \! 1 \! & \! 1 \! & \! 0 \! & \! z^2\!+\!z \! & \! z^2\!+\!z\!+\!1 \! \\
             \! 0 \! & \! z \! & \! 0 \! & \! 1 \! & \! 1 \! & \! z \! & \! z\!+\!1 \! \\
           \end{pmatrix}\!\!.
         \end{footnotesize}
  \end{align*}
Due to space limitations, we omit the generator matrixes for $g_2(z)$.

We refer to the number of XOR used in encoder as the encoding complexity and denote it by $\alpha$.
Table \ref{tab:J} displays the maximum overhead $\ell_{\max}$, total overhead $\ell_{\mathrm{sum}}$, and encoding complexity $\alpha$ of the five representatives for $g_1(z)$ and $g_2(z)$.
\begin{table}[t]
  \centering
  \caption{The maximum overhead, total overhead, and encoding complexity of five representatives for $g_1(z)$ and $g_2(z)$ \label{tab:J}} 
  \begin{small}
  \begin{tabular}{|c|c|c|c||c|c|c|} \hline
    & \multicolumn{3}{|c||}{$g_1(z)$} & \multicolumn{3}{|c|}{$g_2(z)$}  \\ \hline
    & $\ell_{\max}$ & $\ell_{\mathrm{sum}}$ & $\alpha$ & $\ell_{\max}$ & $\ell_{\mathrm{sum}}$ & $\alpha$ \\ \hline\hline
    $\mathbf{A}_{(1,2,3)}$ & $2$ & $6$ & $16$ & $2$ & $6$ & $16$ \\ \hline
    $\mathbf{A}_{(1,2,4)}$ & $2$ & $8$ & $18$ & $2$ & $6$ & $16$ \\ \hline
    $\mathbf{A}_{(1,2,5)}$ & $2$ & $6$ & $16$ & $2$ & $6$ & $16$ \\ \hline
    $\mathbf{A}_{(1,3,4)}$ & $2$ & $6$ & $14$ & $2$ & $8$ & $18$ \\ \hline
    $\mathbf{A}_{(1,3,5)}$ & $2$ & $6$ & $16$ & $2$ & $6$ & $14$ \\ \hline
  \end{tabular}
  \end{small}
\end{table}
From Table \ref{tab:J}, we see that the total overhead and encoding complexity depend on the sequence $\bm{x}$ a primitive polynomial $g(z)$.

\subsection{Performance comparison \label{subsec:compar}}

We assume $N=7$ and $g(z)=z^3+z+1$.
The complexities and overheads for systematic SXOR codes depend on the sequence $\bm{x}$ for each $K$.
Hence, we evaluate the systematic SXOR code with the smallest total overhead.
We compare the systematic SXOR code with the SXOR code and the ZD code.
Table \ref{tab:1} shows the encoding complexity $\alpha$ and two overheads $\ell_{\max}$, $\ell_{\mathrm{sum}}$ for systematic SXOR codes, SXOR codes and ZD codes.

\begin{table}[t]
  \centering
  \caption{The encoding complexity and two overheads of systematic SXOR codes, SXOR codes and ZD codes \label{tab:1}} 
  \begin{small}
    \begin{tabular}{|c|c|c|c|c|c|c|} \hline
      & $K$ & $2$ & $3$ & $4$ & $5$ & $6$ \\ \hline \hline
      & $\ell_{\max}$ & $2$ & $2$ & $2$ & $2$ & $2$ \\ \cline{2-7}
      Systematic SXOR code & $\ell_{\mathrm{sum}}$ & $8$ & $6$ & $6$ & $3$ & $2$ \\ \cline{2-7}
      & $\alpha$ & $12$ & $14$ & $12$ & $12$ & $10$ \\ \hline
      & $\ell_{\max}$ & $2$ & $2$ & $2$ & $2$ & $2$ \\ \cline{2-7}
      SXOR code & $\ell_{\mathrm{sum}}$ & $10$ & $11$ & $12$ & $12$ & $12$ \\ \cline{2-7}
      & $\alpha$ & $12$ & $24$ & $36$ & $48$ & $60$ \\ \hline
      & $\ell_{\max}$ & $3$ & $3$ & $3$ & $3$ & $3$ \\ \cline{2-7}
      ZD code & $\ell_{\mathrm{sum}}$ & $8$ & $8$ & $7$ & $6$ & $3$ \\ \cline{2-7}
      & $\alpha$ & $5$ & $8$ & $9$ & $8$ & $5$ \\ \hline
    \end{tabular}
  \end{small}
\end{table}

From Table \ref{tab:1}, we see that the systematic SXOR code has smaller overheads and encoding complexity than the SXOR code for each $K$.
Moreover, we see that the systematic SXOR code has smaller overheads than the ZD code for each $K$.
However, the systematic SXOR code has larger encoding complexity than the ZD code for each $K$.
Summarizing above, we conclude that the systematic SXOR code has small overheads but its encoding complexity is high.

\section{Conclusion \label{sec:con}} 

In this paper, we have considered MAP decoding algorithm for the ZD code.
Moreover, we have proposed SXOR code and its systematic form which has small overheads under MAP decoding.
We see that the generator matrix can be classified into several classes.
Finally, we have evaluated the overhead of ZD codes and SXOR codes.
As a result, we have shown that the complexity and overhead are depended on the sequence $\bm{x}$ and primitive polynomial $g(z)$ and the systematic SXOR codes have smaller overhead than ZD codes.

\end{document}